\documentclass[preprint]{elsarticle}

\usepackage{lipsum}
\makeatletter
\def\ps@pprintTitle{%
 \let\@oddhead\@empty
 \let\@evenhead\@empty
 \def\@oddfoot{}%
 \let\@evenfoot\@oddfoot}
\makeatother

\usepackage{array,xspace,multirow,hhline,tikz,colortbl,tabularx,booktabs,fixltx2e,amsmath,amssymb,amsfonts,amsthm}
\usepackage{algorithm}
\usepackage{algorithmic}
\usepackage{verbatim,ifthen}
\usepackage{enumitem}
\usepackage{pifont}
\usepackage{ifthen}
\usepackage{calrsfs,mathrsfs}
\usepackage{bbding,pifont}
\usepackage{pgflibraryshapes}

\usepackage{varioref}

\definecolor{light-gray}{gray}{0.9}

	\newcommand{\lemref}[1]{Lemma~\ref{#1}}
	\newcommand{\thmref}[1]{Theorem~\ref{#1}}

	\newcommand{\algref}[1]{Algorithm~\ref{#1}}

	\newcommand{\ie}{i.e.,\xspace}
	\newcommand{\eg}{e.g.,\xspace}

	\newtheorem{lemma}{Lemma}%
	\newtheorem{theorem}{Theorem}%
	\newtheorem{corollary}{Corollary}%

\usepackage{eqparbox}

	\newcommand\eat[1]{}

	\usepackage{enumitem}
	\setenumerate[1]{label=\rm(\it{\roman{*}}\rm),ref=({\it\roman{*}}),leftmargin=*}
	\newlength{\wordlength}
	\newcommand{\wordbox}[3][c]{\settowidth{\wordlength}{#3}\makebox[\wordlength][#1]{#2}}

	\newcommand{\pref}{R\xspace}
	\newcommand{\Pref}[1][]{
		\ifthenelse{\equal{#1}{}}{\mathrel R}{\mathop{R_{#1}}}
	}                                          
	\newcommand{\sPref}[1][]{                  
		\ifthenelse{\equal{#1}{}}{\mathrel P}{\mathop{P_{#1}}}
	}                                          
	\newcommand{\Indiff}[1][]{                 
		\ifthenelse{\equal{#1}{}}{\mathrel I}{\mathop{I_{#1}}}
	}
	\newcommand{\prefset}[1][]{\ifthenelse{\equal{#1}{}}{\mathcal{R}}{\mathcal{R}_{#1}}}

\usepackage{enumitem}
\setenumerate[1]{label=\rm(\it{\roman{*}}\rm),ref=({\it\roman{*}}),leftmargin=*}

\newcommand{\nbh}[1][]{
	\ifthenelse{\equal{#1}{}}{\nu}{\nu(#1)}
}

\newcommand{\cstr}[1][]{
	\ifthenelse{\equal{#1}{}}{\mathscr S}{\cstr(#1)}
}

\newcommand{\choice}[1][]{
	\ifthenelse{\equal{#1}{}}{\mathit{C}}{\choice(#1)}
}

\newcommand{\rsd}[0]{\ensuremath{\mathit{RSD}}}
\newcommand{\sd}[0]{\ensuremath{\mathit{SD}}}

\sloppy

\begin{document}

	\title{The Computational Complexity of \\Random Serial Dictatorship}

	\author[nicta]{Haris Aziz\corref{cor1}}  \ead{haris.aziz@nicta.com.au}		
	\author[tum]{Felix Brandt} \ead{brandtf@in.tum.de}
	\author[tum]{Markus Brill} \ead{brill@in.tum.de}
	\address[nicta]{NICTA and UNSW, 223 Anzac Parade, Sydney, NSW 2033, Australia}
	\address[tum]{Institut f\"ur Informatik, Technische Universit\"at M\"unchen, 85748 Garching, Germany} 
	\cortext[cor1]{Corresponding author}

\begin{abstract}
In social choice settings with linear preferences, \emph{random dictatorship} is known to be the only social decision scheme satisfying strategyproofness and \emph{ex post} efficiency. 
When also allowing indifferences, \emph{random serial dictatorship ($\rsd$)} is a well-known generalization of random dictatorship that retains both properties. $\rsd$ has been particularly successful in the special domain of random assignment where indifferences are unavoidable.
While \emph{executing} $\rsd$ is obviously feasible, we show that \emph{computing} the resulting probabilities is \#P-complete and thus intractable, both in the context of voting and assignment. 
\end{abstract}

\begin{keyword}
Social choice theory \sep 
random serial dictatorship \sep
computational complexity \sep
assignment problem.
\end{keyword}

\maketitle

\section{Introduction}
	
Social choice theory studies how a group of agents can make collective decisions based on the---possibly conflicting---preferences of its members. In the most general setting, there is a set of abstract \emph{alternatives} over which each agent entertains \emph{preferences}. A \emph{social decision scheme} aggregates these preferences into a probability distribution (or \emph{lottery}) over the alternatives.

Perhaps the most well-known social decision scheme is \emph{random dictatorship}, in which one of the agents is uniformly chosen at random and then picks his most preferred alternative. 
\citet{Gibb77a} has shown that random dictatorship is the only social decision scheme that is strategyproof and \emph{ex post} efficient, \ie it never puts positive probability on Pareto dominated alternatives. Note that random dictatorship is only well-defined when there are no ties in the agents' preferences.
However, ties are unavoidable in many important domains of social choice such as assignment, matching, and coalition formation since agents are assumed to be indifferent among all outcomes in which their assignment, match, or coalition is the same \citep[\eg][]{SoUn10a}. 

In the presence of ties, random dictatorship is typically extended to \emph{random serial dictatorship $(\rsd)$}, where dictators are invoked sequentially and ties between most-preferred alternatives are broken by subsequent dictators.\footnote{$\rsd$ is referred to as \emph{random priority} by \citet{BoMo01a}.} 
$\rsd$ retains the important properties of \emph{ex post} efficiency and strategyproofness and is well-established in the context of random assignment~\citep[see \eg][]{Sven94a,AbSo98a,BoMo01a, CrMo01a}.


In this paper, we focus on two important domains of social choice: (1)~the \emph{voting setting}, where alternatives are candidates and agents' preferences are given by rankings over candidates, which are not subject to restrictions of any kind, and (2)~the aforementioned \emph{assignment setting}, where each alternative corresponds to an assignment of houses to agents and agents' preferences are given by rankings over houses. 
Whereas agents' preferences over alternatives are listed explicitly in the voting setting, this is not the case in the assignment setting. However, preferences over houses can be easily extended to preferences over assignments by assuming that each agent only cares about the house assigned to himself and is indifferent between all assignments in which he is assigned the same house. As a consequence, the assignment setting is a special case of the voting setting. However, due to the different representations, \emph{computational} statements do not carry over from one setting to the other. 

We examine the computational complexity of $\rsd$ and show that computing the $\rsd$ lottery is \#P-complete%
\footnote{\#P-completeness is commonly seen as strong evidence that a problem cannot be solved in polynomial time.}  
\emph{both} in the voting setting and in the assignment setting.
As mentioned above, neither of these two results implies the other. 
We furthermore present a polynomial-time algorithm to compute the \emph{support} of the $\rsd$ lottery in the voting setting. This is not possible in the assignment setting, because the support of the $\rsd$ lottery might be of exponential size. However, we can decide in polynomial time whether a given alternative (\ie an assignment) is contained in the support or not.
		
\section{Preliminaries}
		
In the general social choice setting, there is a set $N=\{1,\ldots, n\}$ of \emph{agents}, who have preferences over a finite set $A$ of \emph{alternatives}. The preferences of agent~$i \in N$ are represented by a complete and transitive \emph{preference relation}~$R_i\subseteq A\times A$. The interpretation of $(a,b) \in R_i$, usually denoted by $a \mathrel{R_i} b$, is that agent~$i$ values alternative~$a$ at least as much as alternative~$b$. In accordance with conventional notation, we write~$P_i$ for the strict part of~$R_i$, \ie~$a \mathrel{P_i} b$ if~$a \mathrel{R_i} b$ but not~$b\mathrel{R_i} a$, and~$I_i$ for the symmetric part of~$R_i$, \ie~$a \mathrel{I_i} b$ if~$a \mathrel{R_i} b$ and~$b\mathrel{R_i} a$. A \emph{preference profile} $R = (R_1,\dots, R_n)$ is an $n$-tuple containing a preference relation $R_i$ for every agent $i \in N$.
		
A preference relation $R_i$ is \emph{linear} if $a \mathrel{P_i} b$ or~$b\mathrel{P_i} a$ for all distinct alternatives $a,b \in A$. A preference relation is \emph{dichotomous} if $a\mathrel{R_i} b \mathrel{R_i} c$ implies $a\mathrel{I_i} b$ or $b\mathrel{I_i} c$.
		
We let $\Pi^N$ denote the set of all permutations of $N$ and write a permutation $\pi \in \Pi^N$ as $\pi = \pi(1) \ldots \pi(n)$. For $k\leq n$, we furthermore let $\pi|_k$ denote the prefix of $\pi$ of length $k$, \ie $\pi|_k = \pi(1) \ldots \pi(k)$.

If $R_i$ is a preference relation and $B\subseteq A$ a subset of alternatives, then $\max_{R_i}(B)=\{ a\in B \colon a \mathrel{R_i} b \text{ for all }b\in B\}$ is the set of most preferred alternatives from $B$ according to $R_i$. 
Hence, $a \mathrel{I_i} b$ for all $a,b \in \max_{R_i}(B)$ and $a \mathrel{P_i} b$ for all $a \in \max_{R_i}(B), b \in B\setminus \max_{R_i}(B)$.

	In order to define the social decision scheme known as \emph{random serial dictatorship ($\rsd$)}, let us first define its deterministic variant \emph{serial dictatorship $(\sd)$}. For a given preference profile $R$ and a permutation $\pi \in \Pi^N$, $\sd(R,\pi)$ is defined via the following procedure. Agent $\pi(1)$ chooses the set of most preferred alternatives from $A$, $\pi(2)$ chooses his most preferred alternatives from the refined set and so on until all agents have been considered. The resulting set of alternatives is returned.
	Formally, $\sd(R,\pi)$ is inductively defined via $\sd(R,\pi|_0)=A$ and 
	$\sd(R,\pi|_i) = \max_{R_{\pi(i)}} (\sd(R,\pi|_{i-1}))$.

	Throughout this paper, we assume that the preferences of the agents are such that there is no pair $a,b\in A$ with $a\ne b$ and $a\mathrel{I_i} b$ for all $i\in N$.%
	\footnote{In the assignment setting, this assumption always holds if the agents have linear preferences over houses. 
$\sd$ (and $\rsd$) can be defined without this assumption \citep[\eg][]{ABBH12a}.
	}
	This assumption ensures that the set $\sd(R,\pi)$ is always a singleton. We will usually write $\sd(R,\pi)=a$ instead of $\sd(R,\pi)=\{a\}$.

We are now ready to define $\rsd$. For a given preference profile $R$, $\rsd$ returns $\sd(R,\pi)$, where $\pi$ is chosen uniformly at random from $\Pi^N$. The probability $\rsd(\pref)(a)$ of alternative $a \in A$ is thus proportional to the number of permutations $\pi$ for which $\sd(R,\pi)=a$:
		\[\rsd(\pref)(a)= \frac{1}{n!} \; \left|\left\{\pi\in \Pi^N: \sd(R,\pi)=a\right\}\right| .\]
We refer to the probability $\rsd(R)(a)$ as the \emph{$\rsd$ probability} of alternative $a$ and to the probability distribution $\rsd(R)$ as the \emph{$\rsd$ lottery}.

Our proofs leverage the fact that a certain matrix related to the Pascal triangle has a non-zero determinant.
	
	\begin{lemma}[\citet{Bach02a}]\label{lem:det}
		The $n \times n$ matrix $M=(m_{ij})_{i,j}$ given by $m_{ij}=(i+j-2)!$ has a non-zero determinant. 
		That is,
		\[
		\det 
		\begin{pmatrix}
		 0!& 1! & \cdots & (n-1)!  \\
		 1!& 2! & \cdots & n!  \\
		\vdots  & \vdots  & \ddots & \vdots  \\
		 (n-1)!& n!  & \cdots & (2n-2)!  \\ 
		\end{pmatrix}
		\neq 0\text.
		\]	
	\end{lemma}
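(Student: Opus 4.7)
The plan is to exhibit $M$ as a Gram matrix of linearly independent vectors with respect to a positive definite inner product, which immediately forces $\det M > 0$. The key observation is that every factorial admits the integral representation $k! = \int_0^\infty x^k e^{-x}\, dx$ (the Gamma integral). Applying this to $k = i+j-2$ yields
\[
m_{ij} = (i+j-2)! = \int_0^\infty x^{i-1}\, x^{j-1}\, e^{-x}\, dx,
\]
which is precisely the expression we want in order to recognize a Gram matrix structure.

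With that in hand, the plan proceeds as follows. First, equip the space of real polynomials with the bilinear form $\langle f, g\rangle := \int_0^\infty f(x)\, g(x)\, e^{-x}\, dx$. Since the weight $e^{-x}$ is strictly positive on $(0,\infty)$, this is a well-defined, symmetric, positive definite inner product (the integrals converge because polynomials are dominated by the exponential at infinity). Second, set $f_i(x) := x^{i-1}$ for $i = 1, \ldots, n$; the integral identity above shows $M = \bigl(\langle f_i, f_j\rangle\bigr)_{i,j=1}^n$, i.e., $M$ is the Gram matrix of $f_1, \ldots, f_n$. Third, observe that the monomials $1, x, x^2, \ldots, x^{n-1}$ are linearly independent in $\mathbb{R}[x]$. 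By the standard linear-algebra fact that the Gram matrix of linearly independent vectors with respect to a positive definite inner product is itself positive definite, we conclude $\det M > 0$, and in particular $\det M \neq 0$.

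There is no serious obstacle once the integral representation is spotted; the rest is routine. The only thing to double-check is the convergence of $\int_0^\infty x^{i+j-2} e^{-x}\, dx$, which is immediate from the definition of the Gamma function. If one prefers a purely elementary route avoiding integrals, an alternative plan is to use the identity $(i+j-2)! = (i-1)!\,(j-1)!\,\binom{i+j-2}{i-1}$ to factor $M = D \cdot P \cdot D$, where $D = \operatorname{diag}(0!, 1!, \ldots, (n-1)!)$ and $P = \bigl(\binom{i+j-2}{i-1}\bigr)_{i,j=1}^n$ is the symmetric Pascal matrix, and then invoke the classical LU factorization $P = L L^\top$ with $L_{ij} = \binom{i-1}{j-1}$, yielding $\det P = 1$ and hence $\det M = \prod_{k=0}^{n-1} (k!)^2 > 0$. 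Either route suffices; I would favour the Gram-matrix approach as it is the most conceptual and concise.
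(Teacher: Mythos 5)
Your proof is correct, and it is self-contained, whereas the paper does not prove this lemma at all---it simply cites \citet{Bach02a}, where the determinant is evaluated exactly. Your primary route (writing $m_{ij}=\int_0^\infty x^{i-1}x^{j-1}e^{-x}\,dx$ and recognizing $M$ as the Gram matrix of the linearly independent monomials $1,x,\dots,x^{n-1}$ under the positive definite form $\langle f,g\rangle=\int_0^\infty fg\,e^{-x}\,dx$) cleanly yields $\det M>0$, which is all the paper needs; it is arguably the most conceptual argument and generalizes immediately to any Hankel matrix of moments of a positive measure with infinite support. Your elementary alternative via $M=DPD$ with $D=\operatorname{diag}(0!,\dots,(n-1)!)$ and the symmetric Pascal matrix $P=LL^\top$, $L_{ij}=\binom{i-1}{j-1}$ (the off-diagonal entries of $L$ vanish for $j>i$, and the identity $\sum_k\binom{i-1}{k-1}\binom{j-1}{k-1}=\binom{i+j-2}{i-1}$ is Vandermonde's convolution), buys more: it gives the exact value $\det M=\prod_{k=0}^{n-1}(k!)^2$, which is the form of the result actually established in the cited reference. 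Either argument is complete and correct; the only caveat worth stating explicitly in the Gram-matrix version is why positive definiteness of the form forces a nonzero polynomial to have nonzero norm (a nonzero polynomial has finitely many roots, so $f^2e^{-x}>0$ on a set of positive measure), but that is routine.
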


\section{Voting Setting}

	A \emph{voting problem} is given by a triple $(N, A, \pref)$, where $N=\{1,\ldots, n\}$ is a set of agents, $A$ is a set of alternatives, and $\pref=(\pref_1,\ldots, \pref_n)$ is a preference profile that contains, for each agent $i$, a preference relation on the set of alternatives. The goal is to choose an alternative that is socially acceptable according to the preferences of the agents. 
	
If each agent has a unique most preferred alternative, the $\rsd$ lottery can be computed in linear time. Therefore, computational aspects of $\rsd$ only become interesting when at least some of the agents express indifferences among their most preferred alternatives. The straightforward approach to compute the $\rsd$ lottery involves the enumeration of permutations. This approach obviously takes exponential time. At first sight, it seems that even finding the support of the $\rsd$ lottery requires the enumeration of all permutations. However, we outline a surprisingly simple algorithm that checks in polynomial time whether a given alternative $a$ is contained in the support (Algorithm~\ref{algo:rsdsupport}).

	\begin{algorithm}[h!]
	  \caption{Is the RSD probability of alternative $a$ positive?}
	  \label{PS}
	\renewcommand{\algorithmicrequire}{\wordbox[l]{\textbf{Input}:}{\textbf{Output}:}} 
	 \renewcommand{\algorithmicensure}{\wordbox[l]{\textbf{Output}:}{\textbf{Output}:}}
	\algsetup{linenodelimiter=\,}
	  \begin{algorithmic}[1] 
		
		\STATE $N'\longleftarrow N$
		\STATE $A'\longleftarrow A$
		\STATE $\pi^*\longleftarrow $ empty list
		\WHILE{$N'\neq \emptyset$}
		\IF{$a\notin \max_{\pref_i}(A')$ for all $i\in N'$}
					\RETURN  ``no''
					\ELSE
					 \STATE Take the smallest $i\in N'$  such that $a\in \max_{\pref_i}(A')$
					\STATE $N'\longleftarrow N'\setminus \{i\}$
\STATE $A'\longleftarrow \max_{\pref_i}(A')$
\STATE Append $i$ to $\pi^*$
						\ENDIF  
						\ENDWHILE
						\RETURN ``yes''					
	 \end{algorithmic}
	\label{algo:rsdsupport}
	\end{algorithm}
	
	The algorithm is based on a greedy approach and maintains a working set of alternatives $A'$ and a working set of agents $N'$, which are initialized as $A$ and $N$, respectively. If no agent in $N'$ has $a$ as a most preferred alternative in $A'$, then the algorithm returns ``no.'' Otherwise let $i\in N'$ be the smallest index such that agent $i$ has $a$ as a most preferred alternative in $A'$.\footnote{One may choose any $i\in N'$ for which $a$ is a most preferred alternative in $A'$ and the algorithm still works. Our choice of $i$ with the smallest index simplifies the proof.} 
	The set $A'$ is refined by deleting all alternatives that are not among the most preferred alternatives in $A'$ according to agent $i$. The process is then repeated until all agents have been considered.	

The following lemma will be essential for showing the correctness of Algorithm~\ref{algo:rsdsupport}.

\begin{lemma} \label{lem:move}
Consider a permutation $\pi \in \Pi^N$ and let $1 \leq k<j \leq n$ be such that $a \in \max_{R_{\pi(j)}} \sd(R,\pi|_k)$. Define another permutation $\pi' \in \Pi^N$ by moving $\pi(j)$ to position $k+1$, \ie $\pi' = \pi(1) \ldots \pi(k) \pi(j) \pi(k+1) \ldots \pi(j-1) \pi(j+1) \ldots \pi(n)$. If $\sd(R,\pi)=a$, then $\sd(R,\pi')=a$.
\end{lemma}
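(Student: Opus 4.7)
The plan is to let $B_i = \sd(R,\pi|_i)$ and $B'_i = \sd(R,\pi'|_i)$ and to walk through the construction of $B'_i$ while maintaining the invariant that $a$ survives in the refined set. Since $\pi$ and $\pi'$ agree on the first $k$ positions, $B'_i = B_i$ for all $i \leq k$. At position $k+1$ of $\pi'$ it is $\pi(j)$ who acts, and by hypothesis $a \in \max_{R_{\pi(j)}}(B_k) = B'_{k+1}$. A useful auxiliary observation is that because $B_n = \set{a}$ and the chain $B_0 \supseteq B_1 \supseteq \cdots \supseteq B_n$ is decreasing, $a \in B_i$ for every $i$.

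For the middle segment $k+2 \leq i \leq j$, the agent at position $i$ in $\pi'$ is $\pi(i-1)$, the agent who originally acted at position $i-1$ in $\pi$. I would then prove by induction on $i$ the invariant that $B'_i \subseteq B_{i-1}$ and $a \in B'_i$. The inductive step exploits that $a \in B_{i-1} = \max_{R_{\pi(i-1)}}(B_{i-2})$, so $a$ attains the $R_{\pi(i-1)}$-maximum on the larger set $B_{i-2}$; since $B'_{i-1} \subseteq B_{i-2}$ still contains $a$, the same element $a$ remains maximal in $B'_{i-1}$, and every other element of $\max_{R_{\pi(i-1)}}(B'_{i-1}) = B'_i$ is $R_{\pi(i-1)}$-equivalent to $a$, hence also lies in $B_{i-1}$.

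To handle the transition at position $j$, note that every element of $B'_j$ is $R_{\pi(j)}$-equivalent to $a$, since the refinement chain $B'_{k+1} \supseteq \cdots \supseteq B'_j$ originates with $B'_{k+1} = \max_{R_{\pi(j)}}(B_k)$ and only shrinks. Combined with $B'_j \subseteq B_{j-1}$ and $a \in B_j = \max_{R_{\pi(j)}}(B_{j-1})$, this yields $B'_j \subseteq B_j$. For the tail $i > j$ the agents agree, $\pi'(i) = \pi(i)$, and a parallel induction establishes $B'_i \subseteq B_i$ and $a \in B'_i$, using that $a \in B_i = \max_{R_{\pi(i)}}(B_{i-1})$ at each step. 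At $i = n$ this forces $B'_n \subseteq B_n = \set{a}$ together with $a \in B'_n$, so $\sd(R,\pi') = a$.

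The main obstacle is the bookkeeping at the seam around position $j$: one must justify that the refined set $B'_i$ is contained in the shifted original $B_{i-1}$ throughout the middle segment (because $\pi(j)$ has effectively cut ahead), and then that it rejoins the original chain $B_i$ from position $j$ onward. Everything hinges on two facts that must be carefully invoked at every inductive step: $a$ is an $R_{\pi(i)}$-maximum on $B_{i-1}$ for every $i$, and restricting the ground set to a subset that still contains $a$ cannot demote $a$ from being a maximum.
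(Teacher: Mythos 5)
Your proof is correct and rests on the same key fact as the paper's own argument: the refinement chain under $\pi'$ is contained (suitably shifted) in the chain under $\pi$, because $a$'s survival under $\pi$ keeps $a$ maximal on every subset that still contains it. The paper packages this as a short proof by contradiction at the first position where $a$ would drop out, while you run a direct forward induction; your version merely makes explicit the containment $\sd(R,\pi'|_{i-1}) \subseteq B$ that the paper asserts informally, so the two proofs are essentially the same.
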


\begin{proof}
	Assume for contradiction that $\sd(R,\pi') \ne a$. Then, there exists $i\le n$ such that $a \in \sd(R,\pi'|_{i-1})$ and $a \notin \sd(R,\pi'|_{i})$. That is, $a\notin \max_{R_{\pi'(i)}}(\sd(R,\pi'|_{i-1}))$.
	Since $\pi$ and $\pi'$ agree on the first $k$ positions and $\sd(R,\pi)=a$ , it follows that $i>k$.

	Now consider the set that agent $\pi'(i)$ faces when it is his turn in the original permutation $\pi$. This set, call it $B$, is identical to the set $\sd(R,\pi'|_{i-1})$ the agent faces in permutation $\pi'$, except that agent $\pi(j)$ might not have refined the set yet. Thus $\sd(R,\pi'|_{i-1}) \subseteq B$. Since $a$ was  not among the most preferred alternatives in $\sd(R,\pi'|_{i-1})$ (according to the preferences of agent $\pi'(i)$), it follows that $a$ is not among the most preferred alternatives in $B$. The latter statement however contradicts the assumption that $\sd(R,\pi)=a$.
\end{proof}

\begin{theorem}\label{thm:voting-support}
	In the voting setting, the support of the $\rsd$ lottery can be computed in polynomial time.
\end{theorem}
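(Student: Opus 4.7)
The plan is to prove Theorem~\ref{thm:voting-support} by showing that Algorithm~\ref{algo:rsdsupport} correctly decides, in polynomial time, whether a given alternative $a \in A$ lies in the support of the $\rsd$ lottery; iterating this check over all alternatives then produces the support itself in polynomial time.

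The running time is immediate: the while loop executes at most $n$ times, since every non-terminating iteration strictly decreases $|N'|$, and each iteration scans at most $n$ agents, where each scan involves the $O(|A|)$ computation of $\max_{\pref_i}(A')$. For correctness I would begin with the easier ``yes''-direction: if the algorithm exits the loop normally, it has built a full permutation $\pi^* \in \Pi^N$, and by induction on the iteration index the working set $A'$ at the start of iteration $k+1$ equals $\sd(\pref,\pi^*|_k)$ and still contains $a$. Hence $a \in \sd(\pref,\pi^*)$, which by the singleton assumption forces $\sd(\pref,\pi^*) = a$, so $a$ is in the support.

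The harder direction is the contrapositive, and I would argue that whenever $a$ is in the support, the algorithm never returns ``no''. The invariant to maintain, by induction on the number of completed iterations $k$, is the existence of some $\sigma \in \Pi^N$ with $\sd(\pref,\sigma) = a$ whose first $k$ entries coincide with $\pi^*|_k$; the base case $k=0$ is just the assumption that $a$ is in the support. For the inductive step, since $\sd(\pref,\sigma) = a$, the agent $\sigma(k+1)$ satisfies $a \in \max_{\pref_{\sigma(k+1)}}(\sd(\pref,\sigma|_k)) = \max_{\pref_{\sigma(k+1)}}(A')$, so at least one agent in $N'$ has $a$ as a most preferred alternative in $A'$ and the algorithm proceeds rather than returning ``no''. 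If the smallest-indexed eligible agent $i^*$ picked by the algorithm equals $\sigma(j)$ for some $j > k+1$, I would invoke Lemma~\ref{lem:move} (with $\pi=\sigma$) to move $\sigma(j)$ to position $k+1$, yielding the required $\sigma' \in \Pi^N$ that agrees with $\pi^*$ on the first $k+1$ positions and still satisfies $\sd(\pref,\sigma')=a$; the degenerate case $j=k+1$ needs no swap. The main obstacle is exactly justifying this greedy ``smallest-index-first'' choice, and Lemma~\ref{lem:move} is precisely what licenses the rearrangement, so the invariant is preserved and at $k=n$ the algorithm returns ``yes''.
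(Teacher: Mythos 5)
Your proposal is correct and follows essentially the same route as the paper: both directions are handled identically, and the key step---justifying the greedy smallest-index choice by repeatedly invoking Lemma~\ref{lem:move} to move the algorithm's chosen agent into the next position of a witnessing permutation---is exactly the paper's argument, which you merely organize as an explicit induction interleaved with the algorithm's iterations. Your version is, if anything, slightly more careful in noting that the invariant also guarantees the algorithm never returns ``no'' prematurely.
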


\begin{proof}
It is obvious that Algorithm~\ref{algo:rsdsupport} runs in polynomial time. Since the number of alternatives is linear, we can run Algorithm~\ref{algo:rsdsupport} for each alternative and return the set of alternatives for which the algorithm returns ``yes.'' We now show that Algorithm~\ref{algo:rsdsupport} returns ``yes'' if and only if $a$ is in the support of the $\rsd$ lottery. 

If Algorithm~\ref{algo:rsdsupport} returns ``yes,'' it also constructs a corresponding permutation~$\pi^*$ such that for each $j\in N$, $a$ is one of the most preferred alternatives of $\pi^*(j)$ in the working set of alternatives. Hence, $a\in \sd(\pref,\pi^*)$.

For the other direction, assume that $a$ is in the support of the $\rsd$ lottery. Then, there exists a permutation $\pi^a$ such that $\sd(R,\pi^a)=a$. This permutation can be transformed into the permutation $\pi^*$ constructed by Algorithm~\ref{algo:rsdsupport} by only using steps that are covered by \lemref{lem:move}. In particular, we start with permutation $\pi^a$ and move agent $\pi^*(1)$ to position $1$. Then, agent $\pi^*(2)$ is moved to position $2$, and so on until all agents are in position.
Repeated application of \lemref{lem:move} yields that $\sd(R,\pi^*)=a$. Therefore, Algorithm~\ref{algo:rsdsupport} returns ``yes.''
\end{proof}

We now show that the problem of computing the actual $\rsd$ lottery is intractable, even when preferences are severely restricted. 

\begin{theorem}\label{thm:voting-hard}
	In the voting setting, computing the $\rsd$ probability of an alternative is \#P-complete, even for dichotomous preferences.
\end{theorem}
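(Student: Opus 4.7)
My plan is to prove $\#P$-completeness by a Turing reduction from a canonical $\#P$-complete counting problem. Membership in $\#P$ is immediate, since $\rsd(\pref)(a) \cdot n!$ is exactly the number of $\pi \in \Pi^N$ with $\sd(\pref,\pi) = a$ and the predicate $\sd(\pref,\pi) = a$ can be checked in polynomial time by simulating the refinement loop of the $\sd$ definition; so all of the real work lies in the hardness direction.

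The natural source problem to reduce from is $\#\SAT$, or, since dichotomous preferences are nothing more than approval ballots, a monotone variant such as counting satisfying assignments of a monotone CNF formula or counting independent sets in a graph. Given such an instance $\mathcal I$, I would build a voting profile $(N,A,\pref)$ with a distinguished target alternative $a^\star$ in which the approval set of each agent encodes a piece of $\mathcal I$: ``supporting'' agents (those approving $a^\star$ together with some variable or element alternatives) encode the combinatorial choices that make up a witness, while ``blocking'' agents (those not approving $a^\star$, but approving an appropriate set of other alternatives) encode the conflicts a witness must avoid. A small set of auxiliary alternatives and separator agents is included to enforce the no-universal-indifference assumption and to ensure that the $\sd$ procedure contracts to the singleton $\{a^\star\}$ exactly on permutations that encode valid witnesses for $\mathcal I$.

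A single oracle call to $\rsd$ will in general return a weighted sum of witness-type counts rather than the raw count I want, so I would use the interpolation trick that is precisely what \lemref{lem:det} is designed to license. I would define a one-parameter family of voting instances, indexed by $k = 0, 1, \ldots, n-1$, obtained by extending the base construction with $k$ carefully chosen auxiliary agents (for instance by duplicating a block whose effect on the refinement sequence shifts the stratification index by $k$). For each value of $k$, the quantity $\rsd(\pref_k)(a^\star) \cdot (n+k)!$ should decompose as $\sum_{j=1}^{n} c_j\,(j+k-2)!$ for unknown stratified counts $c_j$, a particular linear combination of which encodes the answer to $\mathcal I$. Stacking these $n$ equations produces a linear system whose coefficient matrix is exactly the factorial matrix of \lemref{lem:det}; invertibility guarantees that all $c_j$ can be solved for in polynomial time, and hence so can the count for $\mathcal I$. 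The main obstacle will be the gadget design itself: ensuring that the dichotomous approval sets faithfully encode $\mathcal I$, that the no-universal-indifference assumption is respected throughout, and, most delicately, that the dependence on the padding parameter $k$ arranges the witness contributions into the prescribed factorial form $(j+k-2)!$. Once the gadget is in hand, Lemma~\ref{lem:det} makes the extraction mechanical, and combined with the $\#P$ upper bound this yields $\#P$-completeness even for dichotomous preferences.
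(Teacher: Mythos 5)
Your architecture is the right one and matches the paper's: establish membership in \#P directly, then give a Turing reduction that queries the $\rsd$ oracle on a family of padded instances and inverts the resulting linear system using \lemref{lem:det}. The form you predict for the oracle answers (a fixed linear combination of unknown stratified counts with factorial coefficients, one equation per padding level $k$) is exactly the shape of equation~(\ref{eq:p_k}) in the paper. However, the proposal stops short of the actual mathematical content: the gadget. You explicitly defer ``ensuring that the dichotomous approval sets faithfully encode $\mathcal I$'' and ``that the dependence on the padding parameter $k$ arranges the witness contributions into the prescribed factorial form,'' but these are not routine details to be filled in later --- they are the theorem. Without a concrete construction, there is no argument that \emph{any} source problem admits an encoding in which $\sd(R,\pi)=a^\star$ holds precisely on a set of permutations whose count stratifies as $\sum_j c_j \cdot j! \cdot k \cdot (n+k-j-1)!$.

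The choice of source problem is also not as free as you suggest. The paper reduces from {\sc \#SetCovers}-$k$ rather than \#\SAT{} for a structural reason: with alternatives $A = U \cup \{a\}$, an agent $i$ approving $(U\setminus S_i)\cup\{a\}$ \emph{removes exactly $S_i$} from the working set when it is his turn, so the prefix of dictators preceding the first ``blocking'' agent eliminates all of $U$ if and only if their sets form a cover. Crucially, this is a property of the prefix as a \emph{set}, independent of the order of the agents within it --- that order-invariance is what produces the clean $j!$ factor, and the irrelevance of which of the $k$ blocking agents comes first produces the factor $k \cdot (n+k-j-1)!$. A direct reduction from \#\SAT{} or from counting independent sets does not hand you this prefix-set structure; you would have to manufacture it, which essentially means rediscovering the covering formulation. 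So the gap is concrete: name the source problem, write down the approval sets, and prove the permutation-counting identity. Everything after that (invertibility via \lemref{lem:det}, Gaussian elimination, bit-length of the constants) is indeed mechanical, as you say.
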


\begin{proof}
	We show that computing the probability $\rsd(R)(a)$ of $\rsd$ choosing alternative $a\in A$ is \#P-complete. Membership of this problem in complexity class \#P is straightforward.  For hardness, we present a polynomial-time Turing reduction from the \#P-complete problem {\sc \#SetCovers}-$k$:

\begin{quote}
Given a set $U$ and a collection $S=\{S_1,\ldots, S_n\}$ of subsets of $U$,
count the number of set covers of $U$ of size $k$. Here, $S'\subseteq S$ is a \emph{set cover} if $\bigcup_{S_i\in S'} S_i=U$ and the \emph{size} of a set cover $S'$ is $|S'|$.
\end{quote}
\#P-completeness of {\sc \#SetCovers}-$k$ follows from the fact that counting set covers of arbitrary size is \#P-complete which in turns follows from the result of \citet{Vadh97a} that even the problem of counting \emph{vertex covers} (a special case of set covers) is \#P-complete.
	
	We are now in a position to outline our Turing reduction.\footnote{Similar reduction techniques are used by \citet{Vali79a}.}
Given an instance $(U,S)$ of {\sc \#SetCovers}-$k$, we construct a preference profile $R^k$ for each $k\in \{1,\ldots, n\}$. The set of alternatives in $R^k$ is $A=U\cup\{a\}$ where $a \notin U$ and the set of agents is $N^k=\{1,\ldots, n+k\}$. The preferences of the agents are specified as follows. Each agent $i\in \{1,\ldots, n\}$ has dichotomous preferences
	\[i\colon (U\setminus S_i)\cup\{a\} \mathrel{P_i} S_i\]
	and each agent $i\in \{n+1,\ldots, n+k\}$ has dichotomous preferences
	\[i\colon U \mathrel{P_i} a \text.\]

We now construct a system of equations in order to show that efficient computability of $\rsd$ probabilities implies efficient computability of the number of set covers of a given size.	For each $k\in \{1,\ldots, n\}$, let $p_k$ denote the number of permutations $\pi \in \Pi^{N^k}$ for which $\sd(R^k,\pi)=a$. Furthermore, let $x_j$ denote the number of set covers of $U$ of size $j$. 
We show that for all $k\in \{1,\ldots, n\}$, 
\begin{equation} \label{eq:p_k}
p_k = \rsd(R^k)(a) \times (n+k)! = \sum_{j=1}^n j! \times k \times (n+k-j-1)! \times x_j \text.	
\end{equation}

The first equality follows directly from the definition of $\rsd$. In order to verify that the third term in equation (\ref{eq:p_k}) equals $p_k$, envision $j+1$ as the earliest position in the permutation at which an agent $\ell \in \{n+1,\ldots, {n+k}\}$ is present. 
	If alternative~$a$ is chosen by $\sd$ for a given permutation, then it must be the case that the $j$ agents preceding $\ell$ in the permutation must have already filtered out all the alternatives in $U$ which means that their corresponding sets must cover $U$. (If the first $j$ agents in a permutation do not filter out all the elements in $U$, then the agent in position $j+1$ ensures that $a$ is \emph{not} chosen since all the alternatives in $U$ are strictly more preferred to $a$ by this agent.) In this case, neither the ordering of the first $j$ agents nor the identity of $\ell \in \{n+1,\ldots, {n+k}\}$ matters. Hence, the number of permutations in which the earliest $\ell \in \{n+1,\ldots, n+k\}$ is at position $j+1$ and $a$ gets selected is $x_j \times j!\times k\times (n+k-j-1)!$. 
Therefore, the total number of permutations of $N^k$ in which $a$ gets selected is $\sum_{j=1}^n j! \times k \times (n+k-j-1)! \times x_j$.

It follows from (\ref{eq:p_k}) that for all $k\in \{1,\ldots, n\}$,
\[\sum_{j=1}^n j!\times (n-j+k-1)! \times  x_j = \rsd(R^k)(a)\times \frac{(n+k)!}{k} \text.\]

We get the following system of equations.	
	\[ 
		\begin{pmatrix}
		(n)!(0)! 	& \cdots &  1!(n-1)! \\
		\vdots   	& \ddots & \vdots  \\
		(n)!(n-1)! 	& \cdots & 1!(n-1+n-1)!  \\ 
		\end{pmatrix}\begin{pmatrix}
		 x_{n}\\
		\vdots  \\
		x_1
		\end{pmatrix}=
		\begin{pmatrix}
		 \rsd(R^1)(a)\times (n+1)!/1\\
		\vdots  \\
	\rsd(R^n)(a)\times (2n)!/n
		\end{pmatrix}\]

If the $\rsd$ probabilities can be computed in polynomial time, so can the $n\times 1$ matrix on the right hand side of the equation. Furthermore, \lemref{lem:det} implies that the $n\times n$ matrix on the left hand side of the equation is non-singular. Hence, the values $x_1,\ldots, x_n$ can be computed in polynomial time via Gaussian elimination. The largest constants in the system of equations are of order $(2n)!$ and can be represented by $O(n\log n)$ bits. Since the variables $x_1,\ldots, x_n$ can be computed in polynomial time, it follows that {\sc \#SetCovers}-$k$ can be solved in polynomial time.
\end{proof}

\begin{corollary}\label{cor:rsdgreaterthank}
The following problem is NP-hard: Given an alternative $a\in A$ and $q\in (0,1)$, is the $\rsd$ probability of $a$ greater than or equal to $q$? 
\end{corollary}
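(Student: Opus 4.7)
The plan is to establish the NP-hardness by a polynomial-time Turing reduction from the problem of exactly computing $\rsd(R)(a)$---known to be $\#\mathrm{P}$-hard by \thmref{thm:voting-hard}---to the threshold problem, via binary search.

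First I would note that, for any voting instance with $n$ agents, $\rsd(R)(a)$ is a rational of the form $p/n!$ for some integer $p \in \{0, 1, \ldots, n!\}$. Since $\log_2(n!) = O(n\log n)$, the integer $p$ has polynomially many bits, and every rational of the form $j/n!$ with integer $j$ in that range likewise has polynomial bit-length; in particular all such rationals lying in $(0,1)$ are admissible inputs $q$ for the threshold problem.

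Next, assuming a hypothetical polynomial-time algorithm for the threshold problem, I would perform binary search over the rational thresholds $j/n! \in (0,1)$ to pin down $p$ exactly in $O(n \log n)$ oracle calls, each on a query of polynomial bit-length. The resulting polynomial-time algorithm for $\rsd(R)(a)$ would, via \thmref{thm:voting-hard}, place the $\#\mathrm{P}$-complete problem {\sc \#SetCovers}-$k$ in $\mathrm{FP}$ and hence collapse $\mathrm{P}^{\#\mathrm{P}} = \mathrm{P}$. Since $\mathrm{NP} \subseteq \mathrm{P}^{\#\mathrm{P}}$ (for any problem in NP, counting accepting certificates of a verifier is a $\#\mathrm{P}$ function), this would yield $\mathrm{P} = \mathrm{NP}$, and the threshold problem is therefore NP-hard under polynomial-time Turing reductions.

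The main obstacle is making sure the binary-search step remains polynomial-time; the crucial point is that $n!$ serves as a universal denominator for $\rsd$ probabilities, so only $O(n \log n)$ bits of precision are ever required and all intermediate thresholds remain of polynomial bit-length---otherwise naive halving of the search interval could in principle produce rationals of exponential bit-length and the reduction would break down.
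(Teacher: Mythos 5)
Your proposal is correct and follows essentially the same route as the paper: observe that $\rsd(R)(a)$ is an integer multiple of $1/n!$, so a polynomial-time threshold oracle lets you recover the exact probability by binary search in $O(\log n!) = O(n\log n)$ queries, contradicting the \#P-hardness established in \thmref{thm:voting-hard}. Your added care about the bit-length of the queried thresholds and the explicit chain $\mathrm{NP} \subseteq \mathrm{P}^{\#\mathrm{P}}$ is a sound elaboration of what the paper leaves implicit.
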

				\begin{proof}
	Let $p=\rsd(R)$ and note that $p(a)=c/n!$ for some integer $c\in [0,n!]$. Therefore $p(a)$ can take $n!$ different values. If there was a polynomial-time algorithm to check whether $p(a)\geq q$, then we can actually compute $p(a)$ via binary search in $\log{n!}<\log{n^n}=n \log{n}$ queries.
				\end{proof}

\section{Assignment Setting}
\label{sec:domain}

We now turn to the setting in which random serial dictatorship is most commonly studied. An \emph{assignment problem} is given by a triple $(N, H, \pref)$, where $N=\{1,\ldots, n\}$ is a set of agents, $H$ is a set of houses with $|H|=n$, and $\pref=(\pref_1,\ldots, \pref_n)$ is a preference profile that contains, for each agent $i$, a \emph{linear} preference relation on the set of houses. The goal is to find an assignment of agents to houses. A \emph{deterministic} assignment is a one-to-one mapping $\sigma: A \rightarrow H$. Apart from deterministic assignments, assignment mechanisms are usually allowed to return randomized assignments, \ie lotteries over the set of deterministic assignments. Every randomized assignment yields a \emph{fractional assignment} that specifies, for every agent $i$ and every house $h$, the probability $p_{ih}$ that house~$h$ is assigned to agent $i$. The fractional assignment can be seen as a compact representation of the randomized assignment.

Observe that the assignment setting corresponds to a special case of the general social choice problem where the set $A$ of alternatives is given by the set of deterministic assignments and agents' preferences over $A$ are obtained by extending their preferences over $H$ in such a way that each agent is indifferent between all deterministic assignments in which he is assigned the same house.\footnote{While agents' preferences over $H$ are linear, their extended preferences over $A$ are not.}

Executing $\rsd$ in the assignment setting has a particularly natural interpretation: choose an ordering of the agents uniformly at random and let every agent select his most preferred among all remaining houses. 

In this section, we examine the computational complexity of $\rsd$ in the assignment setting. 
Listing the probability of each deterministic assignment explicitly is prohibitive because the number of deterministic assignments is exponential in the size of the problem instance. However, it can be checked in polynomial time whether a distinguished deterministic assignment is realized with a positive probability.

			\begin{theorem}
It can be checked in polynomial time whether some deterministic assignment is in the support of the $\rsd$ lottery. 
			\end{theorem}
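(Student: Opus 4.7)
The plan is to characterize support membership via a simple digraph of precedence constraints induced by the assignment. Fix a candidate deterministic assignment $\sigma\colon N \to H$. By definition, $\sigma$ lies in the support of the $\rsd$ lottery iff there exists a permutation $\pi\in\Pi^N$ such that, executed in the order $\pi$, every agent $i$ picks $\sigma(i)$. Since agents have linear preferences over houses, agent $i$ picks $\sigma(i)$ exactly when every house $h$ that $i$ strictly prefers to $\sigma(i)$ has already been taken by the time $i$'s turn arrives.

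I would formalize this by defining, for each agent $i$, the ``envy set'' $U_i=\{h\in H\colon h \sPref[i] \sigma(i)\}$, and then a directed graph $G=(N,E)$ with an arc $\sigma^{-1}(h)\to i$ for every $i\in N$ and every $h\in U_i$. The algorithm is then: build $G$ and test whether it is acyclic (\eg by DFS or by Kahn's algorithm). Since $|N|=n$ and $|E|\le n(n-1)$, both construction and acyclicity testing run in polynomial time.

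The content of the proof is the equivalence ``$\sigma$ is in the support of $\rsd(\pref)$ iff $G$ is acyclic.'' For the forward direction, suppose $\sd(\pref,\pi)=\sigma$. A straightforward induction on the position shows that the $k$-th agent $\pi(k)$ picks $\sigma(\pi(k))$, so for each arc $\sigma^{-1}(h)\to i$ the house $h$ must already be gone when $i$ acts, forcing $\sigma^{-1}(h)$ to precede $i$ in $\pi$. Hence $\pi$ is a topological order of $G$, so $G$ is acyclic. For the reverse direction, pick any topological order $\pi$ of $G$ and prove by induction on $k$ that $\pi(k)$ picks $\sigma(\pi(k))$: by the induction hypothesis the houses taken so far are exactly $\{\sigma(\pi(1)),\ldots,\sigma(\pi(k-1))\}$, so $\sigma(\pi(k))$ is still available (injectivity of $\sigma$), and every house in $U_{\pi(k)}$ has already been claimed by its $\sigma^{-1}$-owner, who precedes $\pi(k)$ by the topological order. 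Thus $\sigma(\pi(k))$ is $\pi(k)$'s top remaining house.

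The main obstacle is the reverse direction: one must be careful to use both the topological ordering (to ensure houses in $U_{\pi(k)}$ are gone) \emph{and} the inductive claim that earlier agents took precisely their $\sigma$-houses (to ensure $\sigma(\pi(k))$ itself is still available and that no other, less-preferred house becomes the actual top choice). Linearity of the preference relations is essential here, since it makes each agent's top choice in any nonempty subset unique, so no tie-breaking subtleties arise.
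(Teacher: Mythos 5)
Your proof is correct, and it takes a different route from the paper. The paper omits its proof and merely notes that it ``invokes a simplified version of Algorithm~\ref{algo:rsdsupport}'': maintain the set of remaining houses, repeatedly pick any remaining agent $i$ whose top remaining house is $\sigma(i)$, and answer ``no'' if at some stage no such agent exists; correctness rests on the exchange argument of \lemref{lem:move} (an agent who would pick $\sigma(i)$ can be moved forward without changing the outcome). You instead encode the problem structurally: an arc $\sigma^{-1}(h)\to i$ for each house $h$ that $i$ strictly prefers to $\sigma(i)$, and support membership becomes acyclicity of this digraph, with the two directions of the equivalence handled by clean inductions that correctly exploit linearity of preferences over houses (so each agent's top remaining house is unique and no self-loops arise). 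The two approaches are operationally almost the same---running Kahn's algorithm on your digraph \emph{is} the paper's greedy procedure, since an agent has in-degree zero among the remaining agents exactly when $\sigma(i)$ is his top remaining house---but your characterization is self-contained (it does not need the exchange lemma), yields a certificate in either direction (a topological order, or a cycle witnessing infeasibility), and makes transparent why the assignment setting is easier than the general voting setting. The paper's route has the advantage of reusing machinery already developed for \thmref{thm:voting-support}. One small point worth stating explicitly if you write this up: the digraph has no self-loops because $h \mathrel{P_i} \sigma(i)$ rules out $h=\sigma(i)$.
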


	We omit the straightforward proof, which invokes a simplified version of \algref{algo:rsdsupport}.

We now show that computing the fractional assignment is intractable. We first present a useful lemma. Fix a house $h \in H$ and let $s_j$ denote the number of sets $N' \subseteq N$ with $|N'|=j$ such that there exists a deterministic assignment in which each agent in $N'$ gets a house that he prefers to $h$.

\begin{lemma}\label{lemma:alpha}
	Computing $s_j$ is \#P-complete.
\end{lemma}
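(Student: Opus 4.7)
Membership in \#P is immediate: a valid certificate is a pair $(N',\mu)$, where $N'\subseteq N$ with $|N'|=j$ and $\mu\colon N'\hookrightarrow H$ is an injection satisfying $\mu(i)\succ_i h$ for every $i\in N'$. Verifying such a certificate takes polynomial time, so the counting problem lies in \#P.

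For hardness, I will mirror the structure of the proof of \thmref{thm:voting-hard} and give a polynomial-time Turing reduction from {\sc \#SetCovers}-$k$. Given an instance $(U,S)$ with $|U|=m$ and $|S|=n$, I construct a family of assignment instances $I^\ell$, one for each $\ell\in\{1,\ldots,n\}$, with $n+\ell$ agents and $n+\ell$ houses. The houses are $U\cup\{h\}\cup D^\ell$ for a set $D^\ell$ of dummies; the agents are indexed $\{1,\ldots,n+\ell\}$. For $i\in\{1,\ldots,n\}$, agent $i$ ranks the houses in $U\setminus S_i$ (in an arbitrary linear order) above $h$, and the remaining houses below $h$. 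For $i\in\{n+1,\ldots,n+\ell\}$, agent $i$ ranks all of $U$ above $h$ and the dummies below. By Hall's theorem, a subset $N'=A\cup B$ (with $A\subseteq\{1,\ldots,n\}$ and $B\subseteq\{n+1,\ldots,n+\ell\}$) contributes to $s_j(I^\ell)$ precisely when $|A|+|B|\leq m$ and the set system $\{U\setminus S_i : i\in A\}$ admits a system of distinct representatives within $U$; the $B$-agents can always be slotted into the remaining houses of $U$.

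The plan is to express $s_j(I^\ell)$, for suitable ranges of $j$ and $\ell$, as a linear combination $\sum_a c_{j,\ell}^a\, x_a$, where $x_a$ is the number of size-$a$ set covers of $(U,S)$ and the coefficients $c_{j,\ell}^a$ involve binomial and factorial terms. Varying $j$ and $\ell$ yields a linear system whose coefficient matrix is of Pascal-triangle type; \lemref{lem:det} ensures non-singularity, so Gaussian elimination recovers the $x_a$ in polynomial time (the entries fit in polynomially many bits), as in the voting setting proof. The main technical obstacle is pinning down the explicit form of $s_j(I^\ell)$: Hall's condition on $\{U\setminus S_i\}$ is not literally a set-cover condition, and some care---for instance by working with the complementary set system, or by inserting an additional layer of agents/houses so that matchability becomes equivalent to coverage---is needed to ensure that the unknowns of the linear system are precisely the set cover counts rather than generic matchable-subset counts. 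Once this combinatorial bookkeeping is in place, the invertibility argument is a direct analog of the one used in the voting setting.
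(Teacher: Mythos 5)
Your hardness argument has a genuine gap, and you have in fact flagged it yourself: $s_j$ counts subsets $N'$ that admit a \emph{matching} into the houses preferred to $h$ (a transversal/Hall condition), whereas the unknowns you want to recover from your linear system are \emph{set-cover} counts. These are different combinatorial conditions --- coverage is a single union constraint on $N'$, while matchability imposes Hall's inequality on every subset of $N'$ --- and varying $j$ and $\ell$ in your construction does not turn one into the other. Deferring this to ``combinatorial bookkeeping'' or a vague appeal to ``the complementary set system'' leaves the core of the reduction unproven; the Pascal-type matrix and \lemref{lem:det} are not the obstacle and cannot compensate for a source problem whose structure does not match the quantity being counted.

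The paper takes a different and much shorter route: it reduces from the problem, shown \#P-complete by \citet{CPV95a}, of counting the subsets $B\subseteq S$ of one side of a bipartite graph $(S\cup T,E)$ that can be perfectly matched into $T$. That source problem already \emph{is} a matchable-subset counting problem, so it lines up with the definition of $s_j$ exactly: setting $N=S$, $H=T\cup\{h\}$, and $h'\mathrel{P_i}h$ iff $\{i,h'\}\in E$ gives $x=\sum_{j=1}^{|S|}s_j$, and hardness follows with no system of equations at all. If you insist on a set-cover source, you must actually prove an equivalence between coverage and matchability for your gadget, which is precisely the missing step. (A minor separate point: your \#P-membership certificate $(N',\mu)$ counts pairs rather than sets, since one matchable $N'$ may admit many injections $\mu$; take the certificate to be $N'$ alone and verify it with a polynomial-time bipartite matching computation.)
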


\begin{proof}
\citet{CPV95a} proved that the following problem is \#P-complete: 
	\begin{quote}
Given an undirected and unweighted bipartite graph $G=(S\cup T,E)$ with $E\subseteq S\times T$, compute $x$, the number of subsets $B\subseteq S$ such that $(B\cup T,E)$ contains a perfect matching. 
	\end{quote}
We propose a polynomial-time Turing reduction from computing $x$ to computing~$s_j$. 
	Consider an assignment problem $(N, H,\pref)$ in which $N=S$, $H = T \cup \{h\}$, and for any $i\in S$ and $h'\in T$, $h' \mathrel{P_{i}} h$ if and only if $\{i,h'\}\in E$. Then, $x=\sum_{j=1}^{|S|}s_j$. Therefore, if there exists a polynomial-time algorithm to compute $s_j$, then there exists a polynomial-time algorithm to compute $x$.
\end{proof}

\begin{theorem}\label{thm:assign-hard}
	Computing the probability with which an agent gets a certain house under $\rsd$ is \#P-complete.
\end{theorem}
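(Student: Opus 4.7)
Membership in \#P is immediate: for any permutation $\pi$, we can evaluate $\sd(R,\pi)$ in polynomial time and check whether the target agent receives the target house, so the probability is a ratio of \#P-computable counts. For hardness, my plan is a polynomial-time Turing reduction from computing $s_j$, which \lemref{lemma:alpha} shows is \#P-complete.

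Given an instance $(N,H,R)$ and target house $h$, for each $k=1,\ldots,n+1$ I would construct an augmented assignment instance $(\tilde N_k,\tilde H_k,\tilde R_k)$ by adjoining $k$ ``claim'' agents $a_1,\ldots,a_k$ and $k$ auxiliary houses $d_1,\ldots,d_k$: each $a_\ell$ ranks $h$ first, then the $d$'s, then the rest; every original agent ranks $d_1,\ldots,d_k$ at the bottom of its preferences. Let $p_k$ be the probability that $a_1$ is assigned $h$ under $\rsd$ in the $k$-th augmented instance. I would then establish the identity
\[
p_k\cdot(n+k)! \;=\; \sum_{j=0}^n j!\,(n+k-j-1)!\,s_j
\]
by the following counting argument: $a_1$ receives $h$ exactly when $a_1$ precedes every other claim agent in $\pi$ and none of the $j$ original agents appearing before $a_1$ takes $h$. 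The latter condition corresponds to the first-$j$ prefix being a matchable set $N'$ of size $j$ (there are $s_j$ such subsets), with the $j!$ orderings of $N'$ and the $(n+k-j-1)!$ orderings of the remaining positions appearing with the right multiplicities.

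The subtle step is justifying that \emph{every} ordering of a matchable $N'$ yields a safe SD execution. In a raw instance this is false: one agent can take the unique preferred-to-$h$ house of another and force the latter onto $h$. To force the equivalence, I would refine the construction by adding enough ``insurance'' structure---for example, replacing each non-target house by a bundle of indistinguishable copies and padding with dummy agents whose exclusive top houses soak up the excess---so that at any step of SD on any prefix of a matchable set, the agent up next still has a preferred-to-$h$ copy available. A careful pigeonhole argument, choosing the copy counts so that matchability of subsets of $N$ is preserved from the original instance, ensures that the number of safe ordered prefixes of length $j$ equals $j!\,s_j$ and the identity above holds with the original $s_j$. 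Pinning down a concrete padding that simultaneously achieves ``all orders safe for matchable $N'$'' and ``matchability unchanged'' is the step I expect to require the most care.

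Once the identity is established, varying $k\in\{1,\ldots,n+1\}$ yields an $(n+1)\times(n+1)$ linear system in $s_0,\ldots,s_n$. After pulling out the column factors $j!$ and reindexing $j' = n-j+1$, the coefficient matrix becomes the Pascal-triangle matrix $((k+j'-2)!)$ of \lemref{lem:det} and is therefore non-singular. All entries of the system have bit length $O(n\log n)$, so Gaussian elimination recovers each $s_j$ in polynomial time, completing the Turing reduction.
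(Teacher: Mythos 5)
Your overall architecture is the same as the paper's: augment the instance with $k$ clones competing for $h$ together with $k$ junk houses, derive for each $k$ an identity expressing $p_k\cdot(n+k)!$ as $\sum_j j!\,(n+k-j-1)!\,s_j$, and invert the resulting Pascal-type system using \lemref{lem:det}. (The paper clones an existing agent $i$ whose top choice is $h$ rather than introducing fresh claim agents, which is cosmetic; it also asserts the counting identity with essentially only the justification you give for the easy direction, so the subtlety you isolate is a real one that the paper's own write-up does not address.)

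The gap is precisely the step you flag and then leave unresolved: the identity needs that for every matchable set $N'$ of size $j$, \emph{every} one of its $j!$ orderings leaves $h$ unclaimed after the serial-dictatorship prefix, and this is false in general. Concretely, with houses $g_1,g_2,h$ and agents $1,2$ having linear preferences $g_1 \mathrel{P_1} g_2 \mathrel{P_1} h$ and $g_1 \mathrel{P_2} h \mathrel{P_2} g_2$, the pair $\{1,2\}$ is matchable (via $1\mapsto g_2$, $2\mapsto g_1$), yet in the order $(1,2)$ agent $1$ greedily takes $g_1$ and agent $2$ is forced onto $h$; so $j!\,s_j$ overcounts the safe prefixes. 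Your proposed repair---duplicating non-target houses and adding dummy agents to ``soak up'' the copies---is left entirely as a sketch, and it is exactly the point on which the reduction stands or falls. It is also far from clear it can work as described: duplicating a house that is the unique preferred-to-$h$ house of several agents changes which subsets of $N$ are matchable (the very quantity you are trying to recover), and the dummy agents you introduce are themselves new potential occupants of preferred houses ahead of the claim agent, so their effect on the count must also be controlled. Without a concrete construction and a proof that simultaneously (a) the $s_j$ of the original instance are preserved and (b) every ordering of every matchable set is safe in the padded instance, the argument is incomplete. A cleaner repair would change the unknowns: let $t_j$ be the number of \emph{ordered} $j$-tuples of distinct original agents whose SD prefix leaves $h$ unclaimed; then $p_k\cdot(n+k)!=\sum_j t_j\,(n+k-j-1)!$ holds exactly and the same linear algebra recovers the $t_j$, but one then needs a \#P-hardness result for the $t_j$ in place of \lemref{lemma:alpha}.
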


\begin{proof}
	For an instance $G=(N,H,\pref)$ of the assignment problem, we denote by $(p_{ih}(G))_{i \in N, h \in H}$ the fractional assignment resulting from $\rsd$.
	
Consider an instance $G_0=(N,H,\pref)$ of the assignment problem and fix an agent $i \in N$. Let $h$ be the most preferred house of agent $i$. We show that computing the probability $p_{ih}$ is \#P-complete. 

	By $G_k$ we will denote a modification of $G_0$ in which $k$ additional agents~$N_k$ and houses have been added. The $k$ new agents have the same preferences over~$H$ as agent $i$. The $k$ additional houses are each less preferred than houses in $H$ by all agents in $N\cup N_k$ in any arbitrary order. 

	We now present a polynomial-time reduction from computing $s_j$ to computing the probability $p_{ih}$ in $G_0$. In problem $G_k$, for $i$ to get house $h$, it is clear that $i$ should be earlier in the permutation than all agents in $N_k$. Furthermore, agents in $N\setminus \{i\}$ which are earlier than $i$, should all be able to get a house that they prefer to house $h$. Based on this insight, we obtain the following equations.
	\begin{align*}
		p_{ih}(G_0) &= \frac{1}{n!} \sum_{j=0}^{n-1}s_j 		  \times j! \times (n-j-1)!\\
		p_{ih}(G_1) &= \frac{1}{(n+1)!} \sum_{j=0}^{n-1}s_j 	  \times j! \times (n-j-1+1)!\\
					&\vdots\\                                        
		p_{ih}(G_{n-1}) &= \frac{1}{(2n-1)!} \sum_{j=0}^{n-1}s_j \times j! \times (n-j-1+n-1)!
		\end{align*}
		The system of equations can be written as

		\begin{equation*}\label{eqn:determinant}
		M 
		\begin{pmatrix}
		 s_{n-1}\\
		s_{n-2}\\
		\vdots  \\
		s_{0}\\
		 \end{pmatrix}
		 =
		\begin{pmatrix}
		 p_{ih}(G_0)\times n!\\
		p_{ih}(G_1)\times (n+1)!\\
		\vdots  \\
		p_{ih}(G_{n-1})\times (2n-1)!\\
		 \end{pmatrix} \text,
		\end{equation*}		
where
\begin{equation*}
	M = \begin{pmatrix}
	 (n-1)!0!& (n-2)!1! & \cdots & 0!(n-1)!  \\
	 (n-1)!1!& (n-2)!2!  & \cdots & 0!n!  \\
	\vdots  & \vdots  & \ddots & \vdots  \\
	 (n-1)!(n-1)!& (n-2)!n!  & \cdots & 0!(2n-2)!  \\ 
	\end{pmatrix} \text. 
\end{equation*}

\lemref{lem:det} yields that the $n\times n$ matrix $M$ is non-singular. The remaining arguments are similar to those in the proof of Theorem~\ref{thm:voting-hard}.
	\end{proof}

As in the voting setting, we get an NP-hardness result as an immediate corollary. The proof mirrors that of Corollary~\ref{cor:rsdgreaterthank}.

\begin{corollary}
	The following problem is NP-hard: Given an agent $i \in N$, a house $h \in H$, and $q\in (0,1)$, is the probability $p_{ih}$ greater or equal to $q$?
\end{corollary}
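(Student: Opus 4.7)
The plan is to mirror the proof of Corollary \ref{cor:rsdgreaterthank} almost verbatim, exploiting the fact that $p_{ih}$ is a rational number with a small, known denominator. Specifically, because $\rsd$ averages uniformly over the $n!$ permutations of $N$, the probability $p_{ih}$ must equal $c/n!$ for some integer $c \in \{0,1,\ldots,n!\}$. Hence $p_{ih}$ can take at most $n!+1$ distinct values, all of which are representable as fractions in $O(n \log n)$ bits.

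Assume toward a contradiction that the threshold problem ``is $p_{ih} \geq q$?'' admits a polynomial-time algorithm. I would then recover the exact value of $p_{ih}$ by binary search over the ordered candidate set $\{0/n!, 1/n!, \ldots, n!/n!\}$: each comparison picks a midpoint $q$ from this set and halves the interval of possible values. The search terminates after at most $\lceil \log_2(n!+1) \rceil = O(n \log n)$ oracle calls, each with a polynomially-sized threshold, so the whole procedure runs in polynomial time. This contradicts \thmref{thm:assign-hard}, which shows that computing $p_{ih}$ is \#P-complete (and in particular NP-hard under polynomial-time Turing reductions), so the threshold decision problem itself must be NP-hard.

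The only point requiring any verification is that the binary-search thresholds remain of polynomial bit-length; this is immediate since every queried value has denominator $n!$, encodable in $O(n \log n)$ bits. No other obstacle arises, as the argument does not depend on any structural features peculiar to the voting setting and uses \thmref{thm:assign-hard} as a black box.
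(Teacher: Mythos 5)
Your proof is correct and takes essentially the same route as the paper, which simply states that the proof mirrors that of Corollary~\ref{cor:rsdgreaterthank}: observe that $p_{ih}=c/n!$ for an integer $c\in[0,n!]$, and use binary search with $O(n\log n)$ threshold queries to recover $p_{ih}$ exactly, contradicting Theorem~\ref{thm:assign-hard}. Nothing is missing.
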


	\thmref{thm:assign-hard} also implies that computing the fractional assignment achieved by the \emph{employment by lotto} method~\citep{ACL99a,KlKl06a} is \#P-complete. This is due to the fact that employment by lotto is an extension of $\rsd$ to two-sided matching in which the set of feasible matchings is a subset of stable matchings. Employment by lotto reduces to $\rsd$ in the assignment model if one side is indifferent among all matchings. Another corollary is that computing the fractional assignment achieved by the \textit{draft} mechanism~\citep{Budi12a} is \#P-complete since the draft mechanism is equivalent to $\rsd$ for the case in which each agent can get a maximum of one house.

\section*{Acknowledgments}

This material is based upon work supported by
the Australian Government's
Department of Broadband, Communications and the Digital
Economy, the Australian Research Council, the Asian
Office of Aerospace Research and Development through
grant AOARD-124056, and the Deutsche Forschungsgemeinschaft under grants {BR~2312/7-1} and {BR~2312/10-1}. 


\bibliographystyle{plainnat}

\end{document}